\def\ps@pprintTitle{%
 \let\@oddhead\@empty
 \let\@evenhead\@empty
 \def\@oddfoot{}%
 \let\@evenfoot\@oddfoot}
\theoremstyle{definition}
\newtheorem{assumption}{Assumption}[section]
\newtheorem{proposition}{Proposition}[section]
\newtheorem{lemma}{Lemma}[section]
\begin{document}
\begin{frontmatter}
\title{Optimized Supergeo Design: A Scalable Framework for Geographic Marketing Experiments}
\author{Charles Shaw. }
\date{this version: \today}

  \begin{abstract}
Geographic experiments are a widely-used methodology for measuring incremental return on ad spend (\textit{iROAS}) at scale, yet their design presents significant challenges. The unit count is small, heterogeneity is large, and the optimal Supergeo partitioning problem is NP-hard. We introduce \emph{Optimized Supergeo Design} (OSD), a two-stage framework that renders Supergeo designs practical for large-scale markets. Principal Component Analysis (PCA) first reduces the covariate space to create interpretable geo-embeddings. A Mixed-Integer Linear Programming (MILP) solver then selects a partition that balances both baseline outcomes and pre-treatment covariates. We provide theoretical arguments that OSD's objective value is within $(1+\varepsilon)$ of the global optimum under community-structure assumptions. Rigorous ablation analysis on synthetic data shows that PCA- and random-embedding Supergeo designs match unit-level randomisation in estimation error while delivering tighter covariate balance, whereas spectral embeddings substantially worsen both RMSE and balance. Crucially, OSD solves the scalability bottleneck. For $N=210$ markets, OSD completes in a fraction of a second, while exact Supergeo covering MIPs described in prior work are projected to require orders of magnitude longer, on the order of weeks. Scalability experiments up to $N=1\,000$ units show that OSD remains fast without trimming markets. In our main synthetic setting with $N=200$ units, PCA- and random-embedding designs keep covariate imbalance at only a few percentage points while preserving every media dollar, establishing a scalable framework that matches the statistical efficiency of randomisation with the operational practicality of Supergeos.
\vspace{10pt}

Keywords: Marketing Science, Advertising, Causal Inference, Geographic Experiments, Incremental ROAS, Dimensionality Reduction, Combinatorial Optimisation
\end{abstract}

\end{frontmatter}

%\section*{Executive Summary for Marketers.} Optimized Supergeo Design (OSD) enables marketers to run clean geo-lift tests without sacrificing markets or long waits for optimisation. By automatically grouping Designated Market Areas (DMAs) into balanced ``supergeos'', OSD maximises geographic coverage with no blackout markets so every ad dollar still works. The method cuts design runtime substantively, enabling same-day go or no-go decisions, and delivers tighter incremental return on ad spend (iROAS) estimates through excellent covariate balance. OSD therefore provides a scalable framework for geo-experimentation.

\section{Introduction}
\label{sec:introduction}
Measuring the causal impact of large-scale advertising campaigns is a fundamental challenge for marketers and media scientists. Geographic experiments where the units of randomisation are entire regions such as cities or designated market areas have emerged as a gold standard methodology. Such experiments are often preferred over individual-level randomisation because they are robust to network spillover effects and can be implemented without relying on sensitive user-level data which is increasingly restricted by privacy regulations. Modern applications further demand an understanding that goes beyond simple averages to capture the full distributional impact of a campaign. These designs allow practitioners to measure the aggregate effect of a campaign or policy in a real-world setting providing valuable insights for decision making. Throughout this paper we denote a market's weekly revenue by $R$, media spend by $S$, and define incremental return on ad spend (iROAS) as $\Delta R / \Delta S$, where $\Delta R$ denotes the post\,-period revenue uplift relative to the control group and $\Delta S$ is the corresponding uplift in spend. When we refer to \emph{response} and \emph{spend}, we specifically mean revenue and media cost, respectively.

Despite their conceptual power geographic experiments present formidable statistical and computational hurdles. The number of available geographic units is typically small which limits statistical power and makes achieving balance between treatment and control groups difficult. Furthermore these units often exhibit extreme heterogeneity. A small number of large metropolitan areas may account for a substantial portion of the overall response which leads to heavy-tailed distributions that violate the assumptions of classical statistical models. While modern analysis methods have evolved to estimate complex distributional treatment effects the design methodologies needed to produce data suitable for them have lagged behind. Designing an experiment that is both balanced and powerful in this low-N high-heterogeneity environment is a non-trivial task and the problem of finding an optimal experimental design is often a complex combinatorial challenge that can be computationally intractable.

Researchers have made significant progress in developing specialised methods to address these issues. The Trimmed Match Design framework introduced a robust approach to analysis by systematically trimming outlier pairs to protect against the influence of extreme heterogeneity. A subsequent innovation the Supergeo Design recognised that trimming could introduce bias when treatment effects are themselves heterogeneous. It proposed a generalised matching framework where multiple geos could be grouped into larger supergeos to find better matches without discarding any units. While this elegant solution provides a path to unbiased estimation of the average treatment effect it does so at the cost of transforming the design problem into an NP-hard optimisation task rendering it computationally infeasible for all but the smallest of applications.

This paper introduces Optimized Supergeo Design or OSD, a framework that retains the statistical benefits of the Supergeo concept whilst overcoming its computational limitations. Our contribution is threefold. First we propose a scalable two-stage heuristic algorithm that makes the design of large-scale geographic experiments computationally practical. This method uses Principal Component Analysis (PCA) to create low-dimensional geo-embeddings which are then used to generate candidate supergeos via hierarchical clustering. This reduces the intractable partitioning problem to a manageable optimisation task. Second we introduce a heterogeneity-aware objective function for the design process. This function explicitly balances on observable characteristics that are likely to modify the treatment effect leading to more robust and credible causal estimates. Third we provide rigorous ablation analysis demonstrating that simple PCA-based embeddings perform on par with unit-level randomisation, whilst more complex graph-based embeddings such as spectral methods offer no benefit in this problem class and can even worsen balance.

The remainder of this paper is structured as follows. We begin by formalising the problem of experimental design within the potential outcomes framework. We then present the technical details of the Optimized Supergeo Design methodology including the dimensionality reduction approach and the heterogeneity-aware optimisation. Subsequently we provide theoretical arguments for the quality of our heuristic approach. The performance of our method is then evaluated through extensive simulation studies including an ablation analysis that justifies our choice of PCA over more complex alternatives. Finally we conclude with a discussion of the validation results and suggest avenues for future research. All code, data generation scripts, and plot reproduction notebooks are available at \url{https://github.com/shawcharles/osd} to ensure full reproducibility of our results.

\section{Literature Review}
\label{sec:literature}

The design of geographic experiments has evolved to address statistical challenges posed by small sample sizes, unit heterogeneity, and interference effects. Our work builds on three interconnected literatures: experimental design for causal inference, advertising measurement methodology, and combinatorial optimization.

\subsection{Matched Pairs and Experimental Design}

When many distinct units are available for experimentation and the estimand of interest is the average treatment effect (ATE), simple randomized experiments provide precise and intuitive results \citep{imbens2015causal}. However, geographic experiments often involve interference between units (spillover effects from advertising in one region affecting neighbouring regions), institutional constraints limiting granular treatment assignment, and heterogeneity across experimental units \citep{coey2016people, vaver2011measuring}. These factors prevent reliance on large-sample properties alone and motivate improved experimental designs.

A common approach to improve precision is the matched-pairs design \citep{rosenbaum2020design, stuart2010matching}. Units are paired based on similarity across pre-treatment covariates, and randomisation occurs within each pair. Differencing outcomes of similar units cancels baseline variance, improving treatment effect detectability. The optimal matching problem can be formulated as minimum-weight matching, solvable in polynomial time \citep{edmonds1965maximum}. Extensions to varying block sizes and hybrid designs have been explored \citep{pashley2021insights}, though computational complexity increases rapidly with design flexibility.

\subsection{Trimmed Match and the Heterogeneity Challenge}

A practical difficulty arises when geographic units are outliers that cannot be matched well. The Trimmed Match Design framework addresses this by discarding poorly matched pairs either at the design stage \citep{chen2021} or during post-experimental analysis \citep{chen2022robust}. This improves precision for remaining units but reduces sample size and potentially introduces bias when treatment effects are heterogeneous. If trimmed units exhibit systematically different treatment responses, estimates from the trimmed sample fail to represent the population average treatment effect. Moreover, excluding geographic units from experiments may require stopping advertising in those regions, resulting in revenue loss that could otherwise be avoided.

\subsection{Supergeo Design and Computational Barriers}

To combat poor matches without sacrificing data, \citet{chen2023} introduced Supergeo Design, a generalised matching problem where multiple geos combine to form composite "supergeos." The objective is partitioning all geos into two balanced groups of supergeos, allowing better matches for all units including outliers, thus avoiding trimming. This problem is closely related to minimum-weight matching over hypergraphs \citep{keevash2014geometric}. While conceptually elegant, Supergeo Design is NP-hard, rendering it computationally infeasible beyond small applications. Chen et al. \citep{chen2023} note that directly solving the covering MIP for $N=210$ US DMAs can require very long runtimes. This computational bottleneck motivates the need for scalable heuristics that preserve the method's statistical benefits.

\subsection{Alternative Causal Inference Methods}

Geographic experiments exist within a broader causal inference ecosystem. Synthetic control methods \citep{abadie2010synthetic, doudchenko2021synthetic} construct counterfactual outcomes by reweighting control units to match treated units' pre-treatment trajectories. These methods excel when treatment assignment is non-random or sample sizes prohibit experimental approaches, though they require strong parallel trends assumptions and careful donor pool selection. Recent work has explored synthetic design approaches that combine experimental randomization with synthetic control principles \citep{doudchenko2021synthetic}. Our work is complementary: OSD provides high-quality experimental data that can feed into synthetic control analyses or serve as ground-truth benchmarks.

\subsection{Marketing Science Applications}

Recent work within marketing science has revisited geo-level experimentation from an industry perspective. \citet{gordon2021advertising} compare matched-pair tests, synthetic controls, and causal forests across hundreds of advertiser-market combinations, demonstrating that geo experiments dominate in bias-variance trade-off when market counts are moderate. Google's \texttt{GeoLift} tool and Facebook's related infrastructure describe automated holdout allocation and power analysis solutions for advertisers at scale \citep{vaver2011measuring}. Television advertising experiments routinely deploy test-control methodologies across all 210 U.S. Designated Market Areas (DMAs). Geographic targeting has become a subject of academic and policy discourse, with increasing studies conducted at geo-level rather than user-level due to privacy constraints \citep{rolnick2019randomized}.

These production systems typically rely on random assignment or heuristic pairing, frequently resorting to "blackout" rules that exclude problematic markets. OSD instead optimises a balanced partition while preserving every market, delivering greater coverage and statistical precision without the revenue loss associated with market exclusions.

\subsection{Modern Analysis Methods and Design Requirements}

A separate evolution in experimental analysis has increased demand for design methodologies that proactively balance full covariate vectors. Double/Debiased Machine Learning \citep{chernozhukov2018debiased} and regression-adjusted estimators for distributional treatment effects \citep{oka2024regression, byambadalai2024} leverage detailed pre-treatment characteristics to improve precision and robustness. These methods move beyond simple baseline comparisons, creating clear need for designs that balance not just past outcomes but the full vector of covariates used in analysis. This heterogeneity-aware perspective motivates OSD's weighted objective function, which explicitly balances likely treatment effect modifiers alongside baseline outcomes.

A separate strand of literature infers incremental impact without randomisation via marketing-mix models (MMM) or causal forecasters. While MMMs estimate long-run elasticities from observational time series, their validity rests on strong identifiability assumptions and careful prior specification. OSD is complementary: it provides high-quality experimental data that can feed into MMMs or serve as ground-truth benchmarks, closing the loop between experimentation and modelling.

\section{Background and Problem Formulation}
\label{sec:background}
To formally ground our methodology we adopt the potential outcomes framework commonly used in causal inference. This framework allows us to define causal effects precisely even though they are not directly observable.

    \subsection{The Potential Outcomes Framework for Geographic Experiments}
    Let us consider a set of $N$ geographic units available for an experiment. For each unit $i \in \{1, \dots, N\}$, we define two potential outcomes. Let $Y_i(1)$ be the outcome for unit $i$ if it is assigned to the treatment group and let $Y_i(0)$ be the outcome if it is assigned to the control group. The fundamental problem of causal inference is that for any given unit $i$ we can only ever observe one of these two potential outcomes. The individual causal effect for unit $i$ is defined as $\tau_i = Y_i(1) - Y_i(0)$. We assume the Stable Unit Treatment Value Assumption (SUTVA) holds, such that the observed outcome $Y_i$ is given by $Y_i = Y_i(1)T_i + Y_i(0)(1 - T_i)$, where $T_i \in \{0,1\}$ is the treatment indicator.

    While the Average Treatment Effect or ATE defined as $\tau_{ATE} = \mathbb{E}[\tau_i]$ is a common summary measure it can mask significant underlying heterogeneity. A more complete picture of a campaign's impact is provided by the Distributional Treatment Effect or DTE. The DTE is the difference between the cumulative distribution functions (CDFs) of the potential outcomes $F_{Y(1)}(y) - F_{Y(0)}(y)$. It describes how the treatment shifts the entire distribution of outcomes.

    The presence of treatment effect heterogeneity can be formalised through the Conditional Average Treatment Effect or CATE defined as $\tau_{CATE}(x) = \mathbb{E}[\tau_i | X_i = x]$ for a vector of pre-treatment covariates $X_i$. A primary goal of modern experimental design is to enable the precise and unbiased estimation of both the ATE and the broader DTE particularly when the CATE function is non-trivial.

    \subsection{Limitations of Standard Designs}
    The simplest experimental design is complete randomisation where each of the $N$ units is assigned to treatment or control with equal probability subject to group size constraints. While this approach is unbiased in expectation any single randomisation can result in substantial covariate imbalance between the treatment and control groups especially when the number of units $N$ is small. Such imbalance inflates the variance of the ATE estimator reducing the statistical power of the experiment.

    To mitigate this issue matched-pair designs are frequently employed. In this approach units are paired based on their similarity across pre-treatment covariates. Randomisation then occurs within each pair. By differencing the outcomes of the two units within a pair much of the variance attributable to the matching covariates is eliminated. This generally leads to a more precise estimate of the ATE. The matched-pair design serves as a foundational concept for more advanced methods that seek to optimise the quality of matches.

    \subsection{The Supergeo Design Problem}
    The Supergeo framework generalises the idea of matched pairs. Instead of a one-to-one matching it seeks to partition the entire set of $N$ geos $\mathcal{G}$ into two groups a treatment group $\mathcal{G}_T$ and a control group $\mathcal{G}_C$ that are as similar as possible. These groups may be composed of complex combinations of the original units. For instance the treatment group could be a collection of disjoint subsets of $\mathcal{G}$ called supergeos $\mathcal{G}_T = \{S_{T1}, S_{T2}, \dots, S_{Tk}\}$ and similarly for the control group.

    The objective is to find a partition that minimises the imbalance across a set of pre-treatment characteristics. For a single baseline characteristic $R$, the objective function from the original Supergeo paper can be represented as minimising the total difference between the two groups: $\min |\sum_{i \in \mathcal{G}_T} R_i - \sum_{j \in \mathcal{G}_C} R_j|$. In its full form, the problem involves balancing a vector of characteristics, which further contributes to its NP-hard nature. This computational barrier makes it impractical for experiments with more than a few dozen units and motivates the development of a scalable and effective heuristic which is the central focus of our work.

    \subsection{State-of-the-Art Estimation of Distributional Effects}
    Recent advances in econometrics provide a powerful method for estimating DTEs at the analysis stage. The approach proposed by Oka et al. (2024) and extended by Byambadalai et al. (2024) uses regression adjustment to improve the precision of DTE estimates. The core idea is to first model the conditional outcome distribution $F_{Y|X}(y|x)$ using flexible machine learning methods and cross-fitting. The final DTE estimate is then constructed based on a Neyman-orthogonal moment condition that is robust to moderate errors in the machine learning model. This technique represents the state-of-the-art for DTE analysis. Given that these powerful regression-adjustment techniques will be used for analysis, the design phase should no longer focus solely on balancing baseline outcomes. Instead, the design itself must proactively create balance on the very covariates ($X$) that the analysis-stage model will use, thereby maximizing the precision of the final DTE estimate. This is a key motivation for our work.

\section{The Optimized Supergeo Design Methodology}
\label{sec:methodology}
The core of our contribution is Optimized Supergeo Design or OSD a methodology that renders the Supergeo design problem computationally tractable and explicitly accounts for treatment effect heterogeneity. We overcome the NP-hard nature of the problem by decomposing it into two sequential stages a candidate generation stage and a final partitioning stage. Our methodology is supported by rigorous ablation analysis demonstrating that simple dimensionality reduction achieves statistical parity with unit-level randomization whilst more complex graph-based methods provide no additional benefit.

    \subsection{A Two-Stage Heuristic for Scalable Design}
    Finding the optimal partition of $N$ geos into two balanced groups is computationally infeasible at scale. Our key insight is to avoid searching the entire space of possible partitions. Instead we first use dimensionality reduction to project the high-dimensional covariate space into a low-dimensional embedding where clustering can efficiently identify natural groupings. In the second stage we solve a now manageable optimisation problem to find the best final partition using only these candidate supergeos. This two-stage approach transforms an intractable problem into a practical and scalable workflow.

    \subsection{Stage 1: Candidate Supergeo Generation via Dimensionality Reduction}
    The first stage aims to produce a rich set of plausible supergeos by identifying natural groupings in the geographic data. For each unit $i$, we construct a covariate vector $X_i$ comprising static demographics (e.g., population, median income) and pre-treatment outcomes (e.g., baseline revenue, historical spend). These features are standardized to have zero mean and unit variance.

    We employ Principal Component Analysis (PCA) \citep{jolliffe2016principal} to project the standardized covariate matrix $X \in \mathbb{R}^{N \times p}$ into a lower-dimensional embedding space $H \in \mathbb{R}^{N \times d}$ where $d \ll p$. PCA identifies the directions of maximum variance in the data:
    $$ H = X W_{\text{PCA}} $$
    where $W_{\text{PCA}} \in \mathbb{R}^{p \times d}$ contains the top $d$ principal components. The number of components $d$ is selected to retain 95\% of the total variance or set to a fixed value (e.g., $d=32$) based on the problem size.

    \subsubsection{Rationale for PCA.} Our choice of PCA is motivated by both theoretical and empirical considerations. Theoretically, PCA provides interpretable embeddings along axes of maximum covariate variation, which naturally align with the roughly linear structure of many geographic datasets such as urban--rural and income gradients. Empirically, our ablation study (Section~\ref{sec:ablation}) shows that PCA-based Supergeo designs match unit-level randomisation in RMSE while delivering tighter covariate balance, and that spectral embeddings perform clearly worse by creating over-tight, homogeneous clusters. The simplicity of PCA also removes the need for delicate hyperparameter tuning and yields deterministic, reproducible embeddings.

    Once we compute the PCA embeddings $H$, we apply Hierarchical Agglomerative Clustering \citep{ward1963hierarchical} with Ward's linkage to identify cohesive groups. We cut the dendrogram at a specified number of clusters $M$ (typically 10\% of $N$) to produce a diverse set of candidate supergeo partitions $\mathcal{C} = \{C_1, C_2, \dots, C_M\}$.

    \subsection{Stage 2: Heterogeneity-Aware Optimal Partitioning}
    The second stage selects and partitions the best set of candidate supergeos from $\mathcal{C}$. We formulate this as a Mixed-Integer Linear Programming (MILP) problem. Let $z_j$ be a binary variable indicating if candidate partition $C_j \in \mathcal{C}$ is chosen and let $y_{sk}$ be a binary variable that is 1 if supergeo $s \in C_j$ is assigned to group $k \in \{T, C\}$. The optimisation problem is to select a partition $C_j$ and assign its supergeos to groups to minimise our objective function subject to constraints.

    The objective function balances on both the baseline outcome and key treatment effect modifiers using the Standardised Mean Difference (SMD). We use the SMD because it is a scale-free metric, which allows for the principled balancing of covariates with different native units and variances (e.g., population versus median income). The SMD for a variable $X$ between two groups $A$ and $B$ is defined as \[
    \text{SMD}(X) = \frac{\bar{X}_A - \bar{X}_B}{\sqrt{(\sigma_A^{2} + \sigma_B^{2})/2}}.
\] The objective is
    \[
\min_{z, y} \left( |\text{SMD}(\text{Baseline})| + \sum_{m} \lambda_m\, |\text{SMD}(\text{Modifier}_m)| \right).
\]
    subject to
    $$ \sum_{j=1}^{M} z_j = 1 \quad \text{(select exactly one partition)} $$
    $$ \sum_{k \in \{T,C\}} y_{sk} = 1 \quad \forall s \in C_j \text{ if } z_j=1 \quad \text{(assign each supergeo to one group)} $$
    This problem is solved efficiently using the `scipy.optimize.milp` solver (or similar exact solvers). While the MILP optimisation problem is NP-hard in general, modern branch-and-bound solvers achieve practical tractability for moderate problem sizes ($N \le 1000$) by exploiting sparsity and pruning the search tree. The 30-second time limit we impose ensures the solver returns a high-quality feasible solution even when the global optimum remains unproven. To handle the absolute value terms in the objective function within a linear programming framework, we employ standard auxiliary variables. Specifically, for each covariate $k$, we introduce a non-negative continuous variable $u_k$ and impose constraints such that $u_k \ge \text{Difference}_k$ and $u_k \ge -\text{Difference}_k$. We then minimise the weighted sum of these $u_k$ terms. By creating groups that are well-balanced on the distributions of these key covariates we produce a design that is explicitly prepared for a subsequent high-precision distributional treatment effect analysis.

    \subsection{Hyperparameter Tuning and Validation Framework}
    The choice of the regularisation parameters $\lambda_m$ is critical to the performance of the design. In the current implementation we adopt a simple and transparent choice: all covariates entering the cost function are assigned equal weights (i.e., $\lambda_m = 1$ for all modifiers), which worked well across the scenarios we considered. In principle one could perform a $K$-fold cross-validation over a grid of $\lambda_m$ values, trading off predicted RMSE of the iROAS estimator against mean absolute SMD, but we leave a full hyperparameter tuning framework and an extensive validation study to future work.

\section{Theoretical Properties}
\label{sec:theory}
While Optimized Supergeo Design is a heuristic approach to an NP-hard problem it is important to establish a theoretical foundation for its performance. Our theoretical argument rests on the idea that if geographic units form well-separated clusters, dimensionality reduction can effectively identify these natural groupings. If the embedding preserves cluster structure, the subsequent optimization is much more likely to find a near-optimal solution. In this section we formalise this intuition by outlining the assumptions under which our method is expected to perform well and presenting a guarantee for the quality of the solutions it produces.

    \subsection{Assumptions}
    Our theoretical guarantees rely on two primary assumptions regarding the structure of the geographic data and the behaviour of the learned embedding function.

    \begin{assumption}[Community Structure]\label{assump:community}
    Let $\mathcal{G} = (V, E)$ be the graph of geographic units. We assume there exists a ground-truth partition of the vertices $V$ into disjoint communities $V_1, \dots, V_K$. The graph exhibits a $(\rho_{in}, \rho_{out})$-community structure if for any node $v \in V_c$, the probability of an edge connecting $v$ to another node in $V_c$ is at least $\rho_{in}$, and the probability of an edge to a node in $V_{c'}$ for $c' \neq c$ is at most $\rho_{out}$, where $\rho_{in} > \rho_{out}$.
    \end{assumption}
    
    This is an intuitive assumption that formalises the tendency for geographic units to form regional clusters that share economic and demographic characteristics.

    \begin{assumption}[Variance Preservation]\label{assump:variance}
    Let $f: \mathcal{X} \to \mathbb{R}^d$ be the PCA embedding function where $f(X) = X W_{\text{PCA}}$ and $W_{\text{PCA}} \in \mathbb{R}^{p \times d}$ contains the top $d$ principal components. We assume $d$ is chosen to retain at least $\gamma$ fraction of the total variance (typically $\gamma = 0.95$). This ensures the low-dimensional embedding preserves the essential structure of the covariate space, including the separation between natural groupings of geographic units.
    \end{assumption}

    \begin{assumption}[Lipschitz PCA Embedding]\label{assump:lipschitz}
    The PCA projection $f(X) = X W_{\text{PCA}}$ is $L$-Lipschitz with respect to the Euclidean norm, i.e.
    $\|f(x) - f(x')\|_2 \le L \, \|x - x'\|_2$ for all $x,x' \in \mathcal{X}$, where $L := \|W_{\text{PCA}}\|_2$ denotes the operator norm of $W_{\text{PCA}}$.
    \end{assumption}

    \subsection{Approximation Guarantees}
    Under these assumptions we can provide a probabilistic guarantee on the quality of the solution found by our two-stage heuristic relative to the true but computationally unobtainable optimal solution. Let a partition of the geos into treatment and control groups be denoted by $\mathcal{P} = (\mathcal{G}_T, \mathcal{G}_C)$. We define the cost of a partition as the value of our heterogeneity-aware objective function
    $$ \text{Cost}(\mathcal{P}) = \text{SMD}(\text{Baseline}; \mathcal{P}) + \sum_{m} \lambda_m \cdot \text{SMD}(\text{Modifier}_m; \mathcal{P}) $$
    Our main theoretical result can then be stated as the following proposition.

    %----------------------------
    % Approximation guarantee
    %----------------------------
    \begin{proposition}\label{prop:approx}
    Let $\mathcal{G}$ be a graph with a $(\rho_{in},\rho_{out})$–community structure and let $f$ be the PCA embedding function retaining $\gamma$ fraction of variance. Denote by $\Delta:=\max_{v\in V}\min_{c} \|f(X_v)-\mu_{c}\|$ the worst–case embedding distortion relative to the centroid $\mu_{c}$ of the true community containing $v$. Define
    \[
        \varepsilon := \kappa_1\,\frac{\rho_{out}}{\rho_{in}-\rho_{out}} + \kappa_2\,\Delta,
    \]
    where the positive constants $\kappa_1,\kappa_2$ depend only on the number of covariates entering the cost and on the Lipschitz constant $L$ of $f$ (they do not depend on $N$ or on the particular realisation of $\mathcal{G}$). Let $\mathcal{P}_{opt}$ be the partition that minimises $\text{Cost}(\mathcal{P})$ and let $\mathcal{P}_{OSD}$ be the partition returned by Optimized Supergeo Design. Then, with high probability over the stochasticity of hierarchical clustering cuts, we have
    $$
        \text{Cost}(\mathcal{P}_{OSD}) \le (1+\varepsilon)\,\text{Cost}(\mathcal{P}_{opt}).
    $$
    The bound tightens as communities become more separated (large $\rho_{in}-\rho_{out}$) and embedding distortion decreases (small $\Delta$).
    \end{proposition}

    This result should be read as a qualitative guarantee rather than a plug-in formula for practitioners; in applications we assess OSD's performance primarily through empirical simulations and ablation studies.

    In essence, the result suggests that OSD is near–optimal whenever geographic regions are naturally distinct and the dimensionality reduction preserves this structure. Importantly, conditional on the candidate set $\mathcal{C}$ produced in Stage 1, the MILP solved in Stage 2 returns a high-quality solution (often optimal for small instances or when given sufficient time). Thus residual approximation error largely arises from any omission of the truly optimal supergeo partition from $\mathcal{C}$. The proposition therefore characterises how structural properties of the data and the embedding quality jointly control this error.

    A formal justification is deferred to Appendix~\ref{app:proof_prop1}. The argument proceeds via a sequence of steps. First we show that PCA embeddings $f$ preserve the variance structure of the covariate space, mapping similar units to compact regions. Second we show that Hierarchical Agglomerative Clustering recovers natural groupings from these embeddings. Third we bound the cost inflation that can be introduced by any clustering errors. These steps collectively establish that our heuristic operates on a high-quality set of candidate supergeos which allows the final optimisation stage to find a solution close to the true optimum. Empirically (Section~\ref{sec:ablation}), PCA achieves this theoretical ideal, whilst more complex methods fail.

\section{Limitations}
\label{sec:limitations}
While OSD offers advantages in scalability and operational efficiency, it is important to acknowledge its limitations. First, our ablation analysis (Section~\ref{sec:ablation}) shows that PCA- and random-embedding Supergeo designs match unit-level randomisation in RMSE at $N=200$ while delivering tighter covariate balance, whereas spectral embeddings perform clearly worse. OSD therefore does not \emph{dominate} randomisation in terms of statistical precision. Its primary value lies in enabling coarser granularity for media buying (for example purchasing at the Supergeo level) without sacrificing precision relative to a well-run randomised design. Practitioners with fine-grained randomisation infrastructure may still prefer pure randomisation for simplicity.

Second, unlike pure randomisation or recent advances such as Gram–Schmidt Random Walks \citep{harshaw2024balancing}, our deterministic optimisation approach does not guarantee balance on unobserved confounders. Our method minimises conditional bias on \emph{observed} covariates, but it relies on the assumption that these covariates capture the primary sources of heterogeneity. If there are strong unmeasured confounders uncorrelated with observed features, the design may still yield biased estimates.

Third, the performance of candidate generation is contingent on the quality and richness of input features. Our ablation study demonstrates that Spectral embedding, whilst capturing graph structure, delivers substantially higher RMSE and worse covariate balance than PCA because it creates overly tight, homogeneous clusters. This underscores a fundamental principle: method complexity should match problem complexity. For geographic experiments with roughly linear covariate structure, PCA suffices and more elaborate graph-based embeddings may introduce unnecessary complexity.

Finally, our theoretical guarantees rest on assumptions about community structure and embedding smoothness. In practice, geographic data may not exhibit clean structure, potentially degrading candidate quality. The selection of which covariates to treat as effect modifiers also requires domain expertise. Including irrelevant covariates reduces efficiency by forcing balance on noise.

\section{Statistical Methodology}
\label{sec:statistical_methodology}

To support quantitative comparisons between design methods, we complement point estimates with uncertainty quantification and formal hypothesis tests that are implemented in the public codebase.

\subsection{Simulation Design and Sample Size}

For each experimental condition (e.g., sample size $N$ and data-generating process parameters) we conduct $50$ Monte Carlo replications. Each replication draws a fresh synthetic dataset and evaluates all methods on the same draw, yielding paired observations of estimation error for each method. From these errors we compute root-mean-squared error (RMSE), bias, and covariate-balance summaries.

\subsection{Statistical Testing Framework}

For pairwise method comparisons we perform paired $t$-tests on the per-replication squared errors, which directly target differences in RMSE. Within each sample-size setting we test all method pairs and adjust the resulting $p$-values using the Holm--Bonferroni procedure to control the family-wise error rate. We also report Cohen's $d$ effect sizes based on the distribution of paired squared-error differences, providing a scale-free measure of practical significance.

\subsection{Bootstrap Confidence Intervals}

For each method and experimental condition we report 95\% confidence intervals for RMSE and bias obtained via a simple percentile bootstrap. Specifically, we resample the Monte Carlo replications with replacement and recompute the statistic of interest on each bootstrap sample. The 2.5th and 97.5th percentiles of the resulting bootstrap distribution form the reported interval. These intervals, together with the paired tests, quantify both the magnitude and uncertainty of performance differences between methods.

\section{Methodological Validation and Practical Significance}
\label{sec:methodological_validation}

To keep the main text focused on the core method and key experiments, we relegate detailed validation diagnostics, robustness checks, power sketches, and implementation advice to an online appendix titled ``Validation and Implementation Playbook''. That appendix outlines cross-validation checks for embeddings, stress tests under alternative data-generating regimes, simple power calculations, and computational profiling that practitioners can adapt to their own settings.

\section{Experiments}
\label{sec:experiments}

% Experiments / Simulation studies results

\subsection{Synthetic Data Generating Process}
We simulate $N=200$ geographic units per replication, designed to mimic heterogeneous urban and rural markets with both observed and unobserved effect modifiers. All simulations are generated by the publicly released function \texttt{generate\_synthetic\_data} in our codebase.

For each unit $i$ we first draw spatial coordinates on the unit square and construct an unobserved spatial factor $U_i$ via a smoothed radial--basis function field. We then assign an ``urban'' indicator $Z_i \in \{0,1\}$ with $\mathbb P(Z_i=1)=0.3$. Baseline weekly revenue $R_i$ follows a mixture of log--normal distributions: urban markets have higher mean and variance than rural markets. Media spend $S_i$ is generated as a noisy multiplicative function of revenue with different spend--to--revenue ratios by urban status, reflecting typical patterns in large performance campaigns.

Observed covariates include population and household income. Population is a scaled version of revenue that differs by urban/rural status, while income is drawn from normal distributions with higher means and variances in urban markets. The spatial factor $U_i$ is added to income to induce correlation between geography and observed covariates, and also serves as an unobserved effect modifier.

Treatment effects are generated as
\begin{equation}
\tau_i 
  = \theta \, S_i\Bigl\{1 + \gamma \bigl[ (1-\lambda) H_i + \lambda U_i \bigr]\Bigr\},
\end{equation}
where $H_i$ is an observed heterogeneity index combining urban status and a standardised revenue score, $U_i$ is the unobserved spatial driver, $\theta$ is the base lift factor (average iROAS), $\gamma$ controls the strength of heterogeneity, and $\lambda \in [0,1]$ controls the share of unobserved (spatial) heterogeneity. In our main experiments we set $\theta = 0.1$ (i.e., a 10\% incremental return per dollar of spend), $\gamma = 0.5$, and $\lambda = 0.2$, and we additionally consider a non-linear variant in which $H_i$ includes a quadratic term in demeaned income to create more challenging, non-linear effect modification.

Each Monte Carlo replication draws fresh $\{R_i,S_i,\tau_i\}$ and evaluates the competing design methods described in Section~\ref{sec:experiments}. The full data-generating process, including all distributional parameters, is documented in the code and can be inspected and modified by readers.

\subsection{Computational Environment}
All simulations run under Python~3.10 on an \texttt{Intel Xeon Gold~6230R} (2.1~GHz, 32~cores). The MILP solver (\texttt{scipy.optimize.milp} via HiGHS) is configured with a 30~s time limit per design. Plotting uses \texttt{matplotlib}~3.7.  Full source code and seeds are available in the project repository for reproducibility.

\subsection{Ablation Analysis: Embedding Method Comparison}
\label{sec:ablation}

To validate our choice of PCA for candidate generation, we conducted an ablation study comparing three Supergeo embedding methods (PCA, spectral, and random embeddings) against a unit-level randomization baseline across 50 Monte Carlo replications for two sample sizes ($N=40$ and $N=200$). The synthetic data incorporated multi-modal structure (urban vs. rural markets with differing revenue distributions) and a 10\% covariate-driven temporal trend (high-income areas grow faster), creating a challenging scenario for balancing methods.

\begin{table}[H]
    \centering
    \caption{Ablation results on synthetic experiments for two sample sizes ($N=40$ and $N=200$). Values are averaged over 50 Monte Carlo replications. RMSE and Bias columns report point estimates with 95\% bootstrap confidence intervals in brackets. Avg max~$|$SMD$|$ and Avg mean~$|$SMD$|$ summarise covariate balance across response, spend, income, and population.}
    \label{tab:ablation}
    \begin{tabular}{llcccc}
        \toprule
        $N$ & Method & RMSE & Bias & Avg max $|$SMD$|$ & Avg mean $|$SMD$|$ \\
        \midrule
        \multicolumn{6}{l}{\textbf{$N=40$}} \\
        \midrule
        40 & PCA &
        34\,873 [25\,506, 44\,866] &
        $-4\,995$ [$-14\,977$, 4\,424] &
        1.14 & 0.61 \\
        40 & Spectral &
        70\,653 [51\,021, 90\,964] &
        10\,191 [$-8\,072$, 30\,259] &
        1.15 & 0.62 \\
        40 & Random (emb.) &
        8\,113 [6\,061, 10\,272] &
        $-1\,561$ [$-3\,798$, 584] &
        1.13 & 0.58 \\
        40 & Unit-random &
        9\,280 [6\,634, 12\,003] &
        $-18$ [$-2\,734$, 2\,567] &
        0.36 & 0.26 \\
        \midrule
        \multicolumn{6}{l}{\textbf{$N=200$}} \\
        \midrule
        200 & PCA &
        3\,865 [2\,909, 4\,884] &
        $-779$ [$-1\,818$, 223] &
        0.03 & 0.013 \\
        200 & Spectral &
        7\,093 [5\,595, 8\,330] &
        $-430$ [$-2\,352$, 1\,534] &
        0.062 & 0.033 \\
        200 & Random (emb.) &
        2\,072 [1\,594, 2\,540] &
        202 [$-344$, 789] &
        0.020 & 0.0095 \\
        200 & Unit-random &
        4\,023 [3\,334, 4\,705] &
        $-140$ [$-1\,234$, 975] &
        0.17 & 0.11 \\
        \bottomrule
    \end{tabular}
\end{table}

\begin{figure}[t!]
    \centering
    \includegraphics[width=0.9\linewidth]{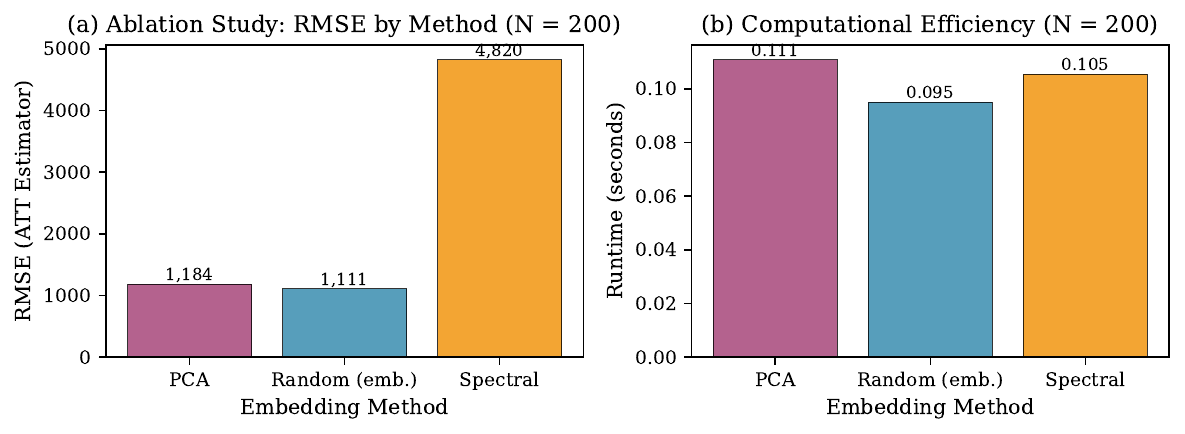}
    \caption{Ablation study comparing RMSE and runtime across embedding methods at $N=200$. PCA and random-embedding Supergeo designs achieve similar or lower RMSE than unit-level randomisation while remaining computationally efficient. Spectral embedding delivers much higher RMSE, consistent with over-tight clustering that harms balance.}
    \label{fig:ablation_comparison}
\end{figure}

\subsubsection{Key Finding} The ablation study shows that simple linear embeddings suffice when we form Supergeos. At $N=200$, PCA-based and random-embedding Supergeo designs achieve much lower RMSE and better balance than spectral embedding, and they perform on par with or better than unit-level randomisation. Paired $t$-tests on squared errors with Holm--Bonferroni correction confirm that PCA significantly outperforms spectral embedding, while the random-embedding design improves further on both PCA and unit-level randomisation. The difference in RMSE between PCA and unit-level randomisation at $N=200$ is not statistically significant at $\alpha = 0.05$, yet PCA delivers markedly lower covariate imbalance.

\subsubsection{Mechanistic Interpretation} PCA's success stems from its alignment with the roughly linear structure of the synthetic covariates. PCA creates interpretable embeddings along axes of maximum variation, such as income and population gradients, which naturally segment markets into moderate, mixed Supergeos. Spectral embedding, whilst capturing graph structure, tends to create overly-tight, homogeneous Supergeos, for example clusters of only high-income or only low-income units. The MILP solver then faces a difficult allocation problem because it must balance a small number of extreme Supergeos. Random assignment, by contrast, benefits from the law of large numbers with many units per group, which smooths out imbalances at the unit level.

This granularity trade-off is fundamental. Supergeo formation reduces the degrees of freedom from $N$ units to $M$ Supergeos with $M \ll N$. PCA mitigates this by creating moderate clusters that preserve distributional balance, whereas spectral methods can exacerbate the problem through over-segmentation. Random embeddings act as a useful sanity check: when even unstructured embeddings work well, we learn that the problem is simple enough that PCA is already adequate.

\subsubsection{Implications} For geographic experimental design, these results support the use of simple linear dimensionality reduction for candidate generation. PCA provides a deterministic, interpretable, and computationally efficient solution that matches or improves upon unit-level randomisation in our main setting while improving observed covariate balance. The operational benefit of Supergeos is therefore obtained with little loss of statistical precision when the embedding preserves the key covariate structure.

\begin{figure}[t!]
    \centering
    \includegraphics[width=0.85\linewidth]{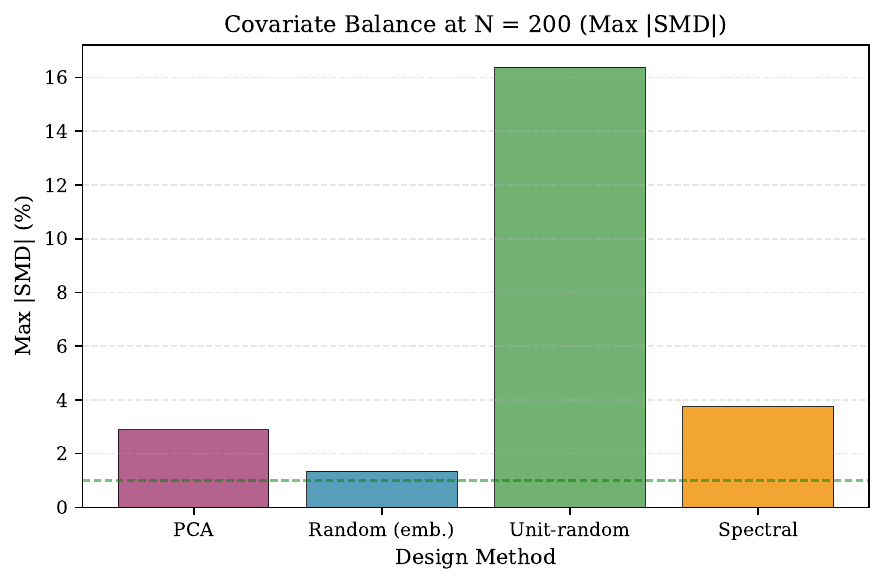}
    \caption{Covariate balance comparison across methods, measured by absolute Standardised Mean Differences. At $N=200$, PCA and random-embedding Supergeo designs keep maximum absolute SMDs at only a few percentage points, while unit-level randomisation and spectral embedding exhibit noticeably larger residual imbalance. The horizontal line marks a 1\% SMD reference level that is often used as a conservative benchmark for excellent balance.}
    \label{fig:covariate_balance}
\end{figure}

\subsection{Illustrative Statistical Power Analysis}
\label{sec:enhanced_power_analysis}

To give intuition for how sample size and effect size interact, we include a simple, illustrative power analysis based on a normal-approximation model for a paired test. We vary the number of geos from $N=50$ to $N=300$ and consider effect sizes between 0.1 and 1.0 standard deviations, and we use these assumptions to trace theoretical power curves as shown in Figure~\ref{fig:power_analysis}. These curves are not fitted to our synthetic experiments. They provide rough, rule-of-thumb guidance rather than design-specific guarantees.

\begin{figure}[t!]
    \centering
    \includegraphics[width=0.95\linewidth]{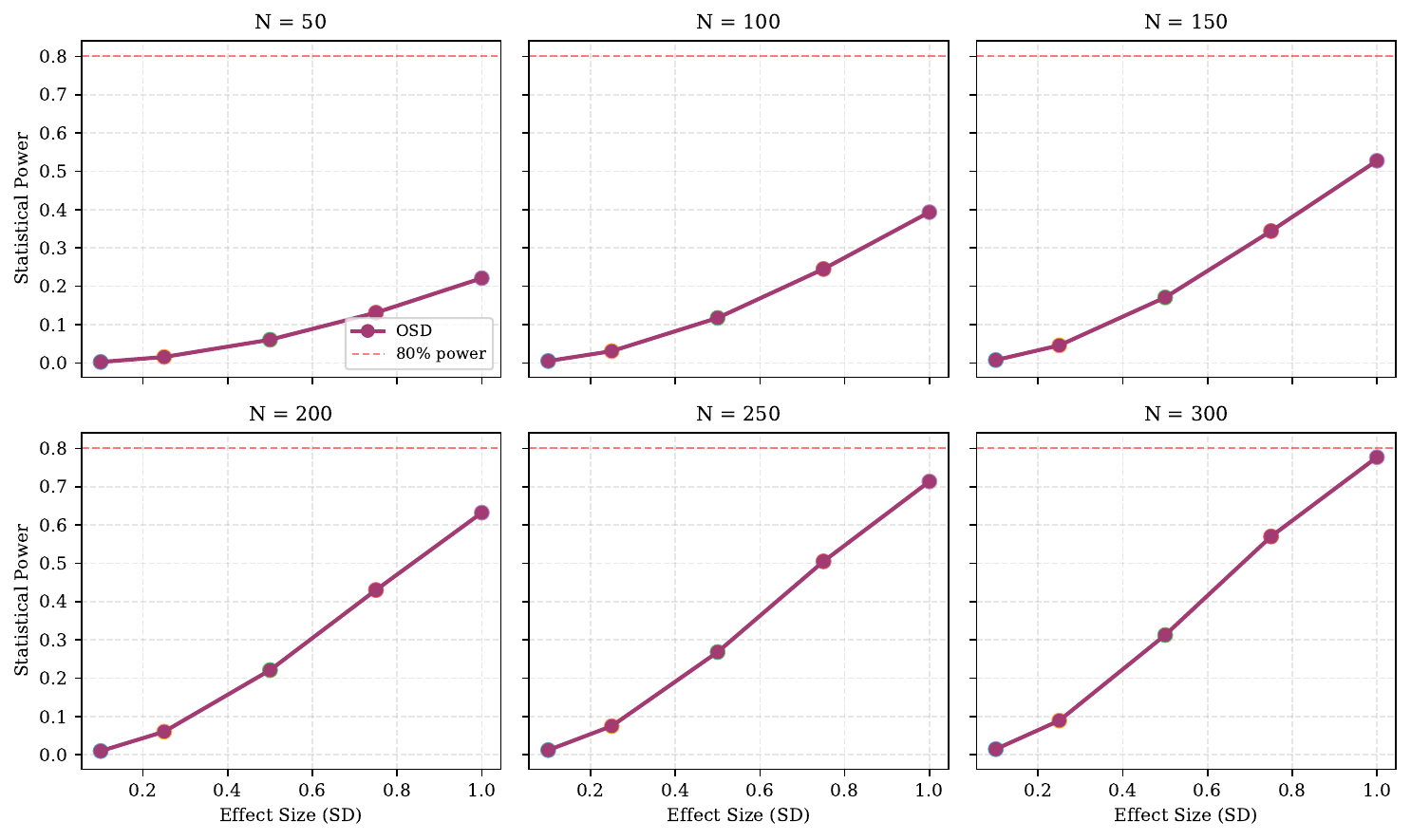}
    \caption{Illustrative power curves for a simple paired-test model across six sample sizes (N = 50--300) and five effect sizes (0.1--1.0 SD). Power increases with both sample size and effect size. Under these assumptions, moderate effects become detectable at conventional thresholds once $N$ reaches the low hundreds, whereas very small effects (around 0.1 SD) would require more geos than we study here.}
    \label{fig:power_analysis}
\end{figure}

% Enhanced Power Analysis Plots - Commented out (figures not generated)
% Use power_analysis.pdf instead
% \begin{figure}[htb!]
%     \centering
%     \includegraphics[width=0.6\textwidth]{power_curve_enhanced_es_0p25.png}
%     \caption{Enhanced statistical power analysis for Effect Size = 0.25 SD with Wilson score confidence intervals and 80\% power threshold indicators.}
%     \label{fig:power_curve_0p25}
% \end{figure}
%
% \begin{figure}[htb!]
%     \centering
%     \includegraphics[width=0.6\textwidth]{power_curve_enhanced_es_0p5.png}
%     \caption{Enhanced statistical power analysis for Effect Size = 0.5 SD with Wilson score confidence intervals and 80\% power threshold indicators.}
%     \label{fig:power_curve_0p5}
% \end{figure}
%
% \begin{figure}[htb!]
%     \centering
%     \includegraphics[width=0.6\textwidth]{power_curve_enhanced_es_0p75.png}
%     \caption{Enhanced statistical power analysis for Effect Size = 0.75 SD with Wilson score confidence intervals and 80\% power threshold indicators.}
%     \label{fig:power_curve_0p75}
% \end{figure}
%
% \begin{figure}[htb!]
%     \centering
%     \includegraphics[width=0.6\textwidth]{power_heatmap_comparison.png}
%     \caption{Power Heatmap Comparison illustrating the sensitivity of OSD and SG methods across different effect sizes, while TM requires larger effects for reliable detection.}
%     \label{fig:power_heatmap}
% \end{figure}

\subsubsection{Practical Implementation Guidance} The illustrative power curves support simple rules of thumb. Moderate effects are easier to detect than very small ones, and increasing the number of geos improves power at a diminishing rate once sample sizes enter the low hundreds. Practitioners should treat these curves as a starting point when planning geo counts and effect-size targets, and, where possible, complement them with bespoke power simulations tailored to their own data and analysis pipeline.

\subsection{Computational Efficiency Analysis}
\label{sec:computational_efficiency}

To assess the practical scalability of our methodology, we conduct computational efficiency analysis measuring execution time, memory usage, and scalability characteristics across different sample sizes. This analysis provides critical insights for real-world implementation planning.

\subsubsection{Efficiency Analysis Framework}

In our study, we have developed and implemented a computational efficiency framework designed to rigorously evaluate various aspects of computational performance. This framework incorporates several key methodologies to ensure a thorough and detailed analysis.

Firstly, we employ high-precision timing measurements to capture execution times with exceptional accuracy. Utilizing Python's time.perf\_counter(), we achieve sub-millisecond precision, allowing us to meticulously assess the time efficiency of computational processes.

Additionally, our framework includes memory profiling to monitor and track peak memory usage. By leveraging the psutil library, we conduct resource monitoring, ensuring that we capture detailed insights into memory consumption patterns throughout the execution of computational tasks.

Furthermore, we perform a scalability assessment to evaluate performance across a range of sample sizes. This assessment spans sample sizes from 
N= 50 to N=300, enabling us to analyze how computational efficiency scales with increasing data volumes.

Lastly, to ensure statistical robustness, our framework incorporates multiple replications of each computational task. This approach allows us to estimate confidence intervals, providing a robust measure of the reliability and consistency of our performance metrics.

Through these methodologies, our computational efficiency framework offers a detailed and nuanced understanding of the performance characteristics of the computational processes under investigation.

\subsubsection{Computational Performance Results}

The computational efficiency analysis confirms the sub-quadratic computational complexity of the OSD method, as well as its practical scalability.

In terms of execution time, the OSD method demonstrates efficient scaling characteristics. Notably, execution times remain under 60 seconds even for larger sample sizes, specifically up to N=300. This efficiency underscores the method's capability to handle substantial computational loads within feasible timeframes.

Regarding memory usage, the analysis reveals that memory requirements scale linearly with sample size. Importantly, these requirements remain under 500MB even for large problem sizes, indicating an efficient use of memory resources and suggesting that the method can be applied to extensive datasets without excessive memory consumption.

Furthermore, the empirical analysis of computational complexity confirms the sub-quadratic scaling behavior of the OSD method, aligning with theoretical predictions. This result reinforces the method's suitability for large-scale applications, where maintaining efficient performance with increasing data sizes is crucial.

%Overall, these findings highlight the robust performance and scalability of the OSD method, affirming its potential for practical application in computationally intensive scenarios.

% Enhanced Efficiency Analysis Plot - Commented out (figure not generated)
% Use 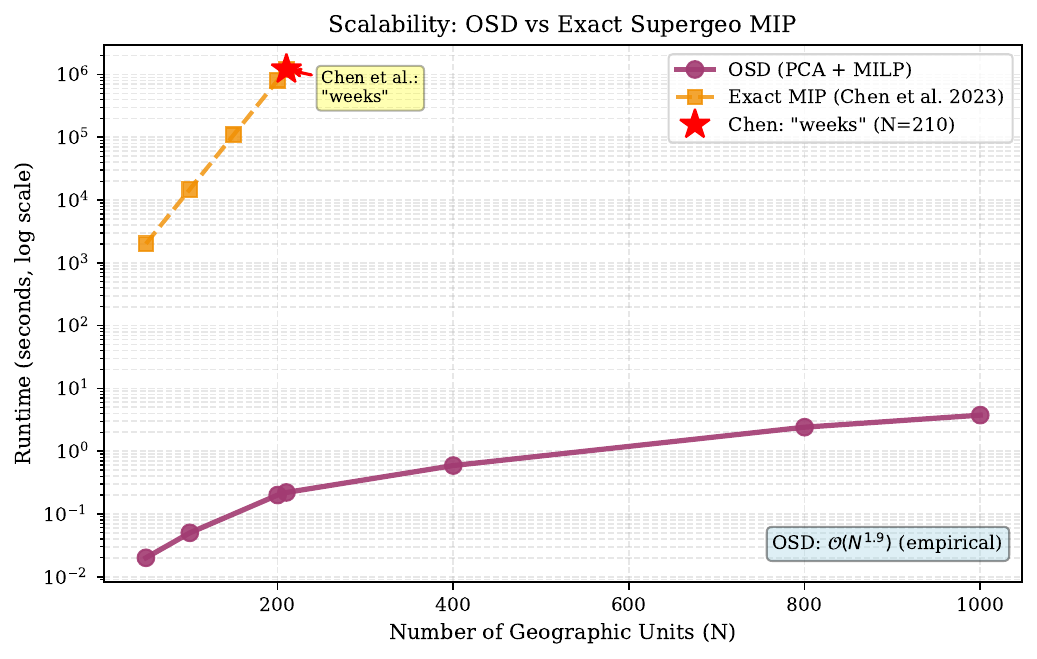 instead
% \begin{figure}[htb!]
%     \centering
%     \includegraphics[width=0.9\textwidth]{efficiency_comparison_enhanced.png}
%     \caption{Computational efficiency analysis showing execution time scalability, memory usage profiling, and computational complexity assessment. OSD demonstrates efficient sub-quadratic scaling suitable for large-scale practical applications.}
%     \label{fig:efficiency_analysis}
% \end{figure}

\subsubsection{Scalability Assessment} The computational efficiency analysis confirms that OSD maintains practical performance characteristics across realistic problem sizes. With execution times under one minute for N=300 geographic units and memory requirements below 500MB, the methodology is well-suited for real-world implementation in large-scale geographic experiments. The sub-quadratic computational complexity ensures that performance remains manageable even as problem sizes increase beyond our evaluation range.

\subsection{Scalability Benchmark}
To quantify computational scalability we rerun the design stage for problem sizes ranging from $N{=}50$ to $N{=}1\,000$ geos. Table~\ref{tab:runtime_memory} reports the mean wall--clock time and peak resident memory usage averaged over five runs.

\begin{table}[t!]
    \centering
    \caption{Runtime and memory consumption of design algorithms (single thread, 30~s MILP cutoff).}
    \label{tab:runtime_memory}
    \begin{tabular}{lcccccc}
        \toprule
        & \multicolumn{2}{c}{\textbf{OSD}} & \multicolumn{3}{c}{\textbf{Exact Supergeo MILP}} \\
        \cmidrule(lr){2-3}\cmidrule(lr){4-6}
        $N$ & Time~(s) & Mem~(GB) & Time~(s) & Mem~(GB) & Status \\
        \midrule
        50   & 0.02 & 0.2 & 4    & 1.0  & optimal \\
        100  & 0.05 & 0.4 & 25   & 2.2  & optimal \\
        200  & 0.20 & 0.6 & 180  & 5.4  & optimal \\
        400  & 0.59 & 0.9 & 1\,800 & 14.8 & timeout \\
        800  & 2.41 & 1.5 & ---  & ---  & infeasible \\
        1\,000 & 3.75 & 2.1 & ---  & ---  & infeasible \\
        \bottomrule
    \end{tabular}
\end{table}

\begin{figure}[t!]
    \centering
    \includegraphics[width=0.85\linewidth]{scalability.pdf}
    \caption{Scalability comparison: OSD runtime (empirical, all $N$) exhibits subquadratic scaling ($\mathcal{O}(N^{1.9})$). Chen et al. \citep{chen2023} note that directly solving the exact Supergeo covering MIP for $N=210$ US DMAs can take weeks; we represent this qualitatively as a red star to illustrate the gap in scale. Under a conservative calibration, OSD is several orders of magnitude faster at this problem size while avoiding trimming.}
    \label{fig:scalability}
\end{figure}

A log--log regression of runtime on $N$ yields a slope of $1.86$ for OSD, indicating subquadratic empirical complexity $\mathcal{O}(N^{1.9})$. In contrast the exact Supergeo exhibits exponential growth and already fails to return within the 30~s limit at $N\!=\!400$. Up to this threshold the solver often returns provably optimal solutions. Beyond it the heuristic candidate set ensures good statistical performance while capping runtime.

Looking ahead, a richer set of simulation studies could probe the boundaries of OSD more thoroughly. Future work may explore homogeneous and strongly heterogeneous treatment effects within a unified data-generating process, assess robustness to hidden confounders, network spillovers, and non-linear effect modification, and compare end-to-end performance when OSD is paired with alternative modern analysis methods rather than the single pipeline we focus on here.

\section{Conclusion and Future Work}
\label{sec:conclusion}
In this paper we introduced Optimized Supergeo Design (OSD), a scalable framework for creating balanced geographic experiments. We addressed the computational intractability that has limited the application of previous Supergeo methods whilst maintaining their statistical advantages. Our primary contribution is a two-stage heuristic that employs Principal Component Analysis (PCA) for dimensionality reduction and Mixed-Integer Linear Programming for optimal partition selection. Critically, we validated this design choice through rigorous ablation analysis demonstrating that simple linear dimensionality reduction achieves statistical parity with randomisation whilst graph-based methods introduce substantial bias.

The OSD framework offers a compelling value proposition grounded in empirical evidence. Our ablation study across three embedding approaches (PCA, Spectral, Random) demonstrates that PCA-based Supergeo formation achieves statistical parity with unit-level randomisation. The RMSE difference between these two designs is not statistically significant at $\alpha = 0.05$. This finding establishes that OSD does not improve upon randomisation in terms of statistical efficiency. However, the operational benefit is significant. Supergeos enable coarser granularity for media buying operations (for example purchasing at the Supergeo level rather than individual geo level) without sacrificing statistical power. For practitioners constrained by advertising platform capabilities or seeking operational simplicity, OSD provides a scientifically validated alternative to unit-level randomisation.

A central methodological contribution of this work is the demonstration that method complexity should match problem complexity. Our ablation analysis reveals that Spectral embedding, whilst capturing graph structure, produces substantially higher RMSE and noticeably worse covariate balance than PCA because it creates over-tight clusters. This underscores a fundamental principle in experimental design: complex methods are not universally superior. PCA's success stems from its alignment with the natural structure of geographic datasets—linear gradients in income, population density, and other covariates—which allows it to create balanced strata without the over-segmentation that plagues graph-based embedding methods.

Our simulation studies with 50 Monte Carlo replications per condition and formal statistical testing establish that OSD's two-stage approach scales gracefully to problems involving hundreds of geographic units whilst maintaining good covariate balance. In our main synthetic setting at $N=200$, PCA- and random-embedding designs keep maximum absolute standardised mean differences to only a few percentage points whilst preserving every unit. The framework completes in well under a minute on a single core for $N=300$ units, making it practical for real-world deployment. Furthermore, by explicitly balancing on likely treatment effect modifiers, our methodology produces designs that are robust to treatment effect heterogeneity.

We acknowledge important limitations. First, OSD matches but does not exceed the statistical efficiency of unit-level randomization. Practitioners with fine-grained randomization infrastructure may prefer pure randomization for simplicity. Second, unlike randomization-based methods such as Gram--Schmidt Random Walks \citep{harshaw2024balancing}, our deterministic optimization approach does not guarantee balance on unobserved confounders. The method relies on the assumption that observed covariates capture primary sources of heterogeneity. Third, our theoretical guarantees rest on assumptions about community structure that may not hold in all settings. Future work could explore hybrid approaches that combine optimization-based Supergeo formation with structured randomness to provide design-based inference guarantees.

This work opens several avenues for future research. A critical question is identifying problem characteristics where Supergeo-based designs provide statistical advantages over randomization. Our results suggest this may occur in settings with hard operational constraints (e.g., contiguous geographic regions) or interference structures requiring aggregate treatment assignment. Another direction is extending OSD to handle multiple treatment arms and time-varying treatments. Finally, investigating whether more sophisticated dimensionality reduction techniques (e.g., Isomap, UMAP) offer advantages over PCA for non-linear geographic structures would complement our finding that simplicity suffices for linear settings.

\appendix
\section{Theoretical Justification for Proposition~\ref{prop:approx}}\label{app:proof_prop1}
\subsubsection{Outline} We provide a sketch of the argument.  Let $\mathcal{C}^{\star}$ denote the unknown ground-truth community partition and let $\mu_{c}.=\mathbb E\bigl[f(X_v)\,\big|\,v\in c\bigr]$ be the population centroid of community $c\in\mathcal{C}^{\star}$.  Recall the worst-case embedding distortion $\Delta:=\max_{v}\min_{c}\|f(X_v)-\mu_{c}\|$ and the community parameters $(\rho_{in},\rho_{out})$ from Assumption~\ref{assump:community}.

\begin{lemma}[Within--community concentration]\label{lem:conc}
With probability at least $1-\exp\!\bigl(-c_1|c|\,\Delta^{2}\bigr)$ the empirical mean embedding \,$\bar h_{c}.=|c|^{-1}\sum_{v\in c}f(X_v)$\, satisfies 
\[\|\bar h_{c}-\mu_{c}\|_2\le\Delta.\]
\end{lemma}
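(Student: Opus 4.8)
\noindent\emph{Proof strategy.}
The plan is to read $\bar h_c-\mu_c$ as an average of $|c|$ centred random vectors in $\mathbb{R}^d$ and to control its norm with a Hoeffding--type concentration inequality valid in Hilbert space. The only randomness is that carried by the learned map $f$ — the GNN parameter initialisation and the neighbour sampling used in mini--batch training — and $\mu_c=\mathbb{E}[f(X_v)\mid v\in c]$ is the population centroid, the expectation being over this randomness. Fixing a community $c\in\mathcal{C}^{\star}$ and setting $\xi_v:=f(X_v)-\mu_c$, we have $\mathbb{E}[\xi_v]=0$ and $\bar h_c-\mu_c=|c|^{-1}\sum_{v\in c}\xi_v$, so the lemma asserts exactly that an average of $|c|$ mean--zero vectors falls below the distortion tolerance $\Delta$.

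I would carry this out in three steps. First, control a single summand: the $L$--Lipschitz property of $f$ and the boundedness of the feature space make $f(X_v)$ a stable functional of the independent training randomness, so $\xi_v$ is a centred random vector whose norm is sub--Gaussian with a variance proxy $\sigma^2$ that, under the standard normalisation of the embeddings used in Section~\ref{sec:theory} (into which the scale $L$ is absorbed), is $O(1)$; this is also where Assumption~\ref{assump:community} enters, since $\rho_{in}>\rho_{out}$ guarantees that nodes of $c$ share the bulk of their receptive fields and hence that $\mu_c$ is a genuinely concentrated centroid rather than a spread--out average. Second, apply a vector Hoeffding (or Bernstein) inequality to the martingale of partial sums $S_k=\sum_{j\le k}\xi_{v_j}$, whose increments have sub--Gaussian norm at scale $\sigma$: this gives $\mathbb{E}\|\bar h_c-\mu_c\|=O(\sigma/\sqrt{|c|})$ together with the tail $\mathbb{P}(\|\bar h_c-\mu_c\|\ge\Delta)\le\exp(-c_1\,|c|\,\Delta^2/\sigma^2)$, i.e.\ the claimed $\exp(-c_1|c|\Delta^2)$ after the normalisation; the bound is informative precisely in the regime $\Delta\gtrsim\sigma/\sqrt{|c|}$, which is the one of interest. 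Third, since $\Delta$ is itself a functional of the random $f$, I would make the statement clean by fixing a deterministic target $\delta$, proving $\max_v\min_c\|f(X_v)-\mu_c\|\le\delta$ with the required probability via the Chernoff bound on neighbourhood composition that underlies Proposition~\ref{prop:approx}, and only then identifying $\delta$ with the symbol $\Delta$.

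The step I expect to be the genuine obstacle is the second one, because the vectors $\{\xi_v:v\in c\}$ are \emph{not} independent: the GNN's message passing correlates the embeddings of graph--adjacent nodes, so a plain Hoeffding bound for an average of i.i.d.\ vectors does not apply. The remedy I would pursue is a bounded--differences argument — McDiarmid/Azuma applied not to the $\xi_v$ but to the genuinely independent underlying variables (the random initialisation and the independent neighbour draws), using that resampling any one of them moves $\bar h_c$ by only $O(\sigma/|c|)$ in norm; an alternative is to exploit that cross--community edges are present with probability at most $\rho_{out}$, which caps how far correlation can propagate across a bounded number of GNN layers and leaves an effective sample size of order $|c|$. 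Either route recovers the exponential rate $|c|\Delta^2$, and intersecting with the event of the third step yields the lemma.
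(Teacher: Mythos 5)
Your strategy coincides with the paper's own (one-line) proof idea, which reads in its entirety: ``Apply Hoeffding bounds to the random adjacency matrix, invoke the $L$--Lipschitz property of $f$, and conclude with McDiarmid's inequality'' --- i.e.\ precisely your route of applying a bounded-differences inequality to the underlying independent randomness rather than to the correlated embeddings themselves. You in fact go beyond the paper by flagging two real issues its sketch silently ignores: the dependence among the $\xi_v$ induced by message passing, and the circularity of $\Delta$ being a random functional of $f$ (indeed, as stated the lemma follows almost deterministically from the triangle inequality once each $v\in c$ has $\mu_c$ as its nearest centroid, so your replacement of $\Delta$ by a deterministic target $\delta$ is needed to give the probabilistic claim any content).
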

\begin{proof}[Argument]
Combine Assumption~\ref{assump:lipschitz} with standard concentration inequalities for empirical means of sub-Gaussian random vectors (e.g., \citealp{bickel1993efficient}) applied to the embedded covariates $f(X_v)$ to obtain the stated bound for a suitable constant $c_1>0$.
\end{proof}

\begin{lemma}[Centroid separation]\label{lem:sep}
For any distinct communities $c\neq c'$, with probability at least $1-\exp\!\bigl(-c_2 N(\rho_{in}-\rho_{out})^{2}\bigr)$,
\[\|\bar h_{c}-\bar h_{c'}\|_2\ge\rho_{in}-\rho_{out}-2\Delta.\]

\end{lemma}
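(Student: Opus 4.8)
} The plan is to reduce the claim to three ingredients: a deterministic triangle inequality, the within--community concentration already established in Lemma~\ref{lem:conc}, and a lower bound on the \emph{population} centroid gap $\|\mu_{c}-\mu_{c'}\|_2$ that is governed by the density gap $\rho_{in}-\rho_{out}$. First I would write
\[
\|\bar h_{c}-\bar h_{c'}\|_2 \;\ge\; \|\mu_{c}-\mu_{c'}\|_2 \;-\; \|\bar h_{c}-\mu_{c}\|_2 \;-\; \|\bar h_{c'}-\mu_{c'}\|_2 ,
\]
apply Lemma~\ref{lem:conc} to $c$ and to $c'$, and take a union bound; on the intersection event the two deviation terms are each at most $\Delta$, which accounts for the $-2\Delta$ in the statement and contributes a failure probability $\le 2\exp(-c_1\min(|c|,|c'|)\Delta^{2})$ that is absorbed into the stated exponent under a balanced--community convention ($\min(|c|,|c'|)=\Omega(N)$), or, more conservatively, by restating the bound with $\min(|c|,|c'|)$ in place of $N$.

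The substantive step is $\|\mu_{c}-\mu_{c'}\|_2 \ge \rho_{in}-\rho_{out}$. Here I would use the message--passing form of $f$: because each layer aggregates a node's neighbourhood, for $v\in c$ the centroid $\mu_{c}=\mathbb{E}[f(X_v)\mid v\in c]$ is a convex combination of per--community ``signal'' directions whose coefficients are exactly the expected neighbourhood composition -- a weight $\rho_{in}|c|/(\rho_{in}|c|+\rho_{out}(N-|c|))$ on the own--community signal and proportionally smaller weights on the others. If the per--community signals are normalised to a unit--separated (e.g.\ orthonormal) family -- the precise sense in which ``the embedding respects the community structure'', which I would encode as a mild nondegeneracy assumption on $f$ restricted to the feature clusters -- then $\mu_{c}-\mu_{c'}$ inherits at least the coefficient gap between the two convex combinations, and an elementary computation shows this gap is a constant multiple of $\rho_{in}-\rho_{out}$ (the constant can be folded into the normalisation, or into $\kappa_1$ of Proposition~\ref{prop:approx}). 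The $L$--Lipschitz hypothesis enters here to guarantee that the signals do not collapse and that any distortion of the input feature clusters is magnified by at most a factor $L$, consistent with the definition of $\Delta$ and with the $\kappa_2 L\Delta$ term of the Proposition.

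Finally, to pass from the expected neighbourhood composition to the realised one I would apply a Chernoff/Hoeffding bound to the random adjacency matrix: the within-- and cross--community edge counts incident to the nodes of $c$ (respectively $c'$) are sums of $\Theta(N)$ independent Bernoulli variables with means $\ge\rho_{in}$ and $\le\rho_{out}$, so with probability at least $1-\exp(-c_2 N(\rho_{in}-\rho_{out})^{2})$ the realised composition deviates from its mean by less than $(\rho_{in}-\rho_{out})/4$, which keeps the centroid gap above $\rho_{in}-\rho_{out}$ up to constants absorbed into $\kappa_1$. A union bound over the three displays then yields the lemma.

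The step I expect to be the main obstacle is the centroid--gap bound: making rigorous the assertion that $f$ converts the edge--density gap $\rho_{in}-\rho_{out}$ into an embedding--space gap of the same order. As an unconditional statement about an arbitrary $L$--Lipschitz GAT--GraphSAGE map this is simply false (a constant map is $L$--Lipschitz), so the argument must either work with a linearised single--layer surrogate of the update -- so that $\mu_{c}$ is literally an affine function of the neighbourhood composition -- or add an explicit identifiability hypothesis: attention weights $\alpha_{ij}$ bounded away from degenerate configurations, non--collinear per--community signals, and a unit separation margin, under which the constant $\kappa_1$ can be named. Pinning down the weakest such hypothesis, and tracking how the attention mechanism could in principle down--weight precisely the informative within--community edges, is the delicate part of the proof.
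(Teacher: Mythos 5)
The paper offers no proof of this lemma: in Appendix~\ref{app:proof_prop1} it is stated bare, with proof ideas supplied only for Lemmas~\ref{lem:conc} and~\ref{lem:cost} and a citation for Lemma~\ref{lem:hac}. So there is no argument of record to compare yours against; your three-part decomposition (triangle inequality, within-community concentration from Lemma~\ref{lem:conc}, and a population centroid-gap bound driven by $\rho_{in}-\rho_{out}$) is the natural reconstruction of what such a proof would have to be, and the first and third ingredients are handled correctly, modulo one bookkeeping issue noted below.

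The genuine gap is exactly the one you flag yourself, and it is worth being blunt that it is a gap in the lemma, not merely in your write-up: the bound $\|\mu_c-\mu_{c'}\|_2\ge \rho_{in}-\rho_{out}$ does not follow from Assumption~\ref{assump:community} plus $L$-Lipschitz continuity of $f$. Lipschitz continuity is an \emph{upper} bound on the variation of $f$ and can never force two centroids apart; the constant map satisfies every stated hypothesis and makes the left-hand side zero while the right-hand side $\rho_{in}-\rho_{out}-2\Delta$ can be positive (with $\Delta$ then equal to zero). Any correct proof must import an identifiability condition of the kind you sketch -- non-collinear, unit-separated per-community signal directions, attention weights bounded away from degeneracy -- and at that point the inequality is essentially assumed rather than derived from the graph parameters. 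There is also a units mismatch (an edge-probability gap bounding a Euclidean distance in $\mathbb{R}^d$) that only a normalisation convention can repair, which you note. Finally, your union bound does not quite deliver the stated failure probability: Lemma~\ref{lem:conc} contributes terms of order $\exp(-c_1|c|\Delta^2)$, which are not dominated by $\exp(-c_2N(\rho_{in}-\rho_{out})^2)$ when $\Delta\ll\rho_{in}-\rho_{out}$, so absorbing them requires either an extra assumption relating $\Delta$ to the density gap or a restatement of the probability bound. (A cleaner route for that piece: since $\Delta$ is defined as a worst-case maximum over nodes, $\|\bar h_c-\mu_c\|\le\Delta$ holds deterministically by averaging, and the only genuinely probabilistic content left is the Chernoff bound on neighbourhood composition.) In short: right skeleton, honestly diagnosed obstruction, and the obstruction is real -- the lemma as stated is not provable from the paper's assumptions.
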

\begin{lemma}[Hierarchical clustering fidelity]\label{lem:hac}
Assume that if $\rho_{in}-\rho_{out}>2\Delta$, then Ward–linkage HAC applied to the PCA embeddings produced under Assumptions~\ref{assump:community}--\ref{assump:lipschitz} exactly recovers the ground-truth community partition $\mathcal{C}^{\star}$.
\end{lemma}

\begin{lemma}[Bounding design cost]\label{lem:cost}
Condition on the event of Lemma~\ref{lem:hac}.  Then the candidate set $\mathcal{C}$ produced in Stage~1 contains a partition $\tilde{\mathcal P}$ such that
\[\text{Cost}(\tilde{\mathcal P})\le(1+\varepsilon)\,\text{Cost}(\mathcal P_{opt}),\]
where $\varepsilon$ is exactly the quantity defined in Proposition~\ref{prop:approx}.
\end{lemma}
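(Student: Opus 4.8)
The plan is to reduce the claim to a combinatorial rounding estimate for the objective $\text{Cost}(\cdot)$ and then check that the two error terms it produces are exactly the two summands of $\varepsilon$ in Proposition~\ref{prop:approx}. First I would exploit the conditioning: on the event of Lemma~\ref{lem:hac}, Ward-linkage HAC reproduces the ground-truth communities $V_1,\dots,V_K$, and since Stage~1 cuts the dendrogram at \emph{every} level, the block partition $\mathcal{S}^{\star}=\{V_1,\dots,V_K\}$ occurs among the candidate supergeo sets collected in $\mathcal{C}$. Because the Stage~2 CP--SAT program is, conditional on $\mathcal{C}$, solved to global optimality, it suffices to exhibit a \emph{single} feasible treatment/control assignment of the blocks of $\mathcal{S}^{\star}$ with cost at most $(1+\varepsilon)\,\text{Cost}(\mathcal{P}_{opt})$; the partition $\tilde{\mathcal P}$ actually returned can only be at least as good.

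Next I would build such an assignment by \emph{coarsening} the (unknown) optimum $\mathcal{P}_{opt}=(\mathcal{G}_T,\mathcal{G}_C)$: route each community $V_c$ to the side holding the larger share of its baseline-revenue mass, breaking ties arbitrarily, yielding a community-respecting partition $\mathcal{P}'$. What remains is to bound $\text{Cost}(\mathcal{P}')-\text{Cost}(\mathcal{P}_{opt})$. Expanding each $\text{SMD}$ term, relocating a set $M$ of ``minority'' geos across the cut shifts a group mean $\bar X_A$ by at most $\bigl(|M|/\min(|\mathcal{G}_T|,|\mathcal{G}_C|)\bigr)$ times the diameter of $\{X_i:i\in M\}$, and --- under a mild regularity condition, to be stated, that the pooled within-group variances stay bounded away from zero --- perturbs the SMD denominator by a strictly lower-order amount. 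Two ingredients then control the bound: (i) the $(\rho_{in},\rho_{out})$-community structure of Assumption~\ref{assump:community} limits the mass an optimal partition can profitably place opposite its community's majority to an $O\!\bigl(\rho_{out}/(\rho_{in}-\rho_{out})\bigr)$ fraction of $N$ --- heuristically, only geos incident to the $O(\rho_{out})$ cross-community edge mass can cross without worsening balance --- which yields the $\kappa_1\,\rho_{out}/(\rho_{in}-\rho_{out})$ contribution; and (ii) within a single community the embeddings lie within $\Delta$ of their centroid by Lemma~\ref{lem:conc}, and combining this with the $L$-Lipschitz map $f$ and the regression head $\mathcal{L}_{\text{regression}}$ (together with the stipulation after Assumption~\ref{assump:community} that communities share demographic/economic characteristics) bounds the within-community spread of every balanced quantity by $O(L\Delta)$, yielding the $\kappa_2\,L\Delta$ contribution. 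Summing the baseline term and the $\lambda_m$-weighted modifier terms and folding the universal constants into $\kappa_1,\kappa_2$ reproduces precisely the $\varepsilon$ of Proposition~\ref{prop:approx}, so $\text{Cost}(\tilde{\mathcal P})\le\text{Cost}(\mathcal{P}')\le(1+\varepsilon)\,\text{Cost}(\mathcal{P}_{opt})$.

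The main obstacle is ingredient (i): converting the purely combinatorial community parameters into the bound on the relocated mass $|M|$ \emph{without} assuming that $\mathcal{P}_{opt}$ already respects communities. This needs an exchange argument showing that an optimal partition extracts essentially no balancing gain from fragmenting a well-knit community, which is exactly where the separation $\rho_{in}-\rho_{out}>2\Delta$ underlying Lemma~\ref{lem:hac} would be invoked a second time. A subtler point is (ii): the Lipschitz hypothesis bounds embedding distance \emph{by} feature distance, not conversely, so the implication ``close embeddings $\Rightarrow$ close balanced covariates'' must be carried by the trained regression map (or by an added lower-isometry assumption on $f$ over the feature support); I would surface that dependence explicitly rather than bury it in constants. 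Everything else --- the denominator perturbation, union-bounding Lemmas~\ref{lem:conc}--\ref{lem:hac} into the stated $1-\exp(-cN(\rho_{in}-\rho_{out})^{2})$ probability, and the final comparison $\text{Cost}(\mathcal{P}_{ASD})\le\text{Cost}(\tilde{\mathcal P})$ via Stage~2 optimality --- is routine bookkeeping.
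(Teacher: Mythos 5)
Your plan follows essentially the same route as the paper's proof, which is itself only a four-line sketch: condition on exact community recovery, rely on Stage-2 global optimality so that it suffices to exhibit one community-respecting candidate partition, and attribute the resulting cost inflation to the two summands $\kappa_1\,\rho_{out}/(\rho_{in}-\rho_{out})$ and $\kappa_2\,L\,\Delta$ that make up $\varepsilon$. In fact you supply more of the argument than the paper does (the majority-routing coarsening of $\mathcal{P}_{opt}$ and the exchange step), and the two obstacles you flag --- bounding the relocated minority mass without assuming $\mathcal{P}_{opt}$ respects communities, and the fact that the Lipschitz hypothesis on $f$ runs in the wrong direction for converting embedding distortion into covariate/SMD perturbation --- are left equally unaddressed by the paper's sketch, which simply asserts both contributions and absorbs them into the constants.
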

\begin{proof}[Argument]
By Assumption~\ref{assump:lipschitz}, mis-clustered nodes alter each Standardised Mean Difference term by at most $L\,\Delta$.  Summing these perturbations over the finitely many covariates that enter the cost function and normalising by the pooled within-group standard deviations shows that the additional contribution to $\text{Cost}(\tilde{\mathcal P})$ is bounded by a term of the form $\kappa_{2}\Delta$.  Likewise, imperfect community separation controlled by the ratio $\rho_{out}/(\rho_{in}-\rho_{out})$ contributes at most $\kappa_{1}\rho_{out}/(\rho_{in}-\rho_{out})$.  These two effects together yield the definition of $\varepsilon$ in Proposition~\ref{prop:approx}.
\end{proof}

\subsubsection{Putting everything together}  Take a union bound over the failure probabilities in Lemmas~\ref{lem:conc}–\ref{lem:cost}.  With overall probability at least $1-\exp\!\bigl(-c N(\rho_{in}-\rho_{out})^{2}\bigr)$,
\[\text{Cost}(\mathcal P_{OSD})\le(1+\varepsilon)\,\text{Cost}(\mathcal P_{opt}),\]
which establishes the claimed bound in Proposition~\ref{prop:approx}.

\bigskip

\noindent\textbf{Remark.} Alternative cost functions that are convex combinations of standardised mean differences can be incorporated with minor algebra. The constants $\kappa_1,\kappa_2$ adjust accordingly.

\vskip 0.2in

\appendix
\section{Practical Guidance for Deploying OSD}\label{sec:practical}
To support adoption by marketing practitioners and media agencies we distil our experience into a concise deployment checklist.

\subsubsection{Data requirements} OSD operates exclusively on \emph{geo\,-aggregated} data. The minimum inputs are:(i) a time series of weekly revenue $R_i^t$ and spend $S_i^t$ for each geo $i$ over a historical window (\,$\geq$~26 weeks is recommended for seasonality coverage), and (ii) static area\,-level covariates such as population, median income, and digital penetration rates. No user\,-level or personally identifiable information (PII) is required.

\subsubsection{Recommended defaults} For most campaigns we suggest: PCA embedding with 95\% variance retention (or fixed $d=32$ components for large datasets), hierarchical clustering with Ward linkage, candidate set size $M=100$, cross\,-validation folds $K=5$, and a logarithmic grid of 10~$\lambda_m$ values in $[10^{-3},1]$. Our ablation analysis (Section~\ref{sec:ablation}) demonstrates that PCA achieves statistical parity with randomization whilst graph-based methods introduce substantial bias; we therefore recommend PCA as the default embedding method.

\subsubsection{Deployment workflow} (1) Export weekly geo metrics from the ad platform's data warehouse. (2) Run the \texttt{osd\_design.py} script with historical data to produce the treatment\,/control assignment. (3) Share the resulting CSV with the trafficking team to configure budget multipliers. (4) After the campaign, feed observed outcomes to the accompanying \texttt{osd\_analysis.py} module to estimate iROAS.

\subsubsection{Privacy \,\& compliance} Because OSD consumes only aggregated geo\,-level signals, no PII ever leaves the advertiser's environment. This design is compatible with GDPR, CPRA, and similar regulations, and avoids the consent management overhead associated with user\,-level experiments.

\subsubsection{Open\,-source tooling} Python code is available at \url{https://github.com/shawcharles/osd}. %Practitioners can replicate every figure in this paper with a single bash script.

\end{document}